\documentclass[runningheads]{llncs}
\usepackage{amssymb,amsmath}
\usepackage{graphicx,color} 
\usepackage{enumerate,paralist,xspace}
\usepackage[font=small]{subfig,caption}
\usepackage{todonotes} 
\newcommand{\pn}{\mathsf{pn}}

\let\doendproof\endproof
\renewcommand\endproof{~\hfill\qed\doendproof}

\begin{document}

\title{Low Ply Drawings of Trees\thanks{Research partially supported by DFG grant Ka812/17-1. The research by Pavel Valtr was supported by the grant GA\v{C}R 14-14179S of the Czech Science Foundation.}}

\titlerunning{Low Ply Drawings of Trees}
%
\author{
Patrizio~Angelini\inst{1} \and
Michael~A.~Bekos\inst{1} \and
Till~Bruckdorfer\inst{1} \and\\
Jaroslav Han\v{c}l Jr.\inst{2} \and
Michael~Kaufmann\inst{1}  \and
Stephen Kobourov\inst{3} \and \\
Antonios~Symvonis\inst{4} \and
Pavel~Valtr\inst{2} }
%
\authorrunning{Angelini et al.}

\institute{%
	Institut f\"ur Informatik, Universit\"at T\"ubingen, T\"ubingen, Germany
	\and
	Dept. of Applied Mathematics, Charles University (KAM), Prague, Czech Republic
	\and 
	Dept. for Computer Science, University of Arizona, Tucson, USA
	\and
	School of Applied Mathematical \& Physical Sciences, NTUA, Athens, Greece.
}

\maketitle

\begin{abstract}
We consider the recently introduced model of \emph{low ply graph drawing}, in which the ply-disks of the vertices do not have many common overlaps, which results in a good distribution of the vertices in the plane. The \emph{ply-disk} of a vertex in a straight-line drawing is the disk centered at it whose radius is half the length of its longest incident edge. The largest number of ply-disks having a common overlap is called the \emph{ply-number} of the drawing.

We focus on trees. We first consider drawings of trees with constant ply-number, proving that they may require exponential area, even for stars, and that they may not even exist for bounded-degree trees. Then, we turn our attention to drawings with logarithmic ply-number and show that trees with maximum degree $6$ always admit such drawings in polynomial area.
\end{abstract}

\section{Introduction}

Let $\Gamma$ be a straight-line drawing of a graph $G$. For a vertex $v \in G$, let the \emph{ply-disk} $D_v$ of $v$ be the open disk with center $v$ and radius $r_v$ that is half of the length of the longest incident edge of $v$. For a point $q \in \mathbb{R}^2$ in the plane, denote by $S_q$ the set of disks with $q$ in their interior, i.e., $S_q=\{D_v~|~\|v-q\| < r_v\}$. 

The \emph{ply-number} of a straight-line drawing $\Gamma$ is $\pn(\Gamma) = \max\limits_{q \in \mathbb{R}^2} |S_q|$. In other words, it describes the maximum number of ply-disks that have a common non-empty intersection. The \emph{ply-number} of a graph $G$ is $\pn(G) = \min\limits_{\Gamma \mathtt{of} G} \pn(\Gamma)$.

The ply-number is one of the most recent quality measures for graph layouts~\cite{ply-original}. While traditional measures, such as edge crossings~\cite{bcgjm-cap-13} and symmetries~\cite{eh-sgd-13}, have been studied for decades, the notion of optimizing a graph layout so that the spheres of influence of each vertex (see~\cite{liotta-proximity} for different variants) are well distributed is new. Goodrich and Eppstein~\cite{eg-snprn-08} observed that real-word geographic networks usually have only constant sphere-of-influence overlap, or in the terminology of this paper, constant ply-number. 

The problem of computing graph drawings with low ply-number is related to \emph{circle-contact representations} of graphs, where vertices are interior-disjoint circles in the plane and two vertices are adjacent if the corresponding pair of circles touch each other~\cite{h-contact-96,Hli98}. Every maximal planar graph has a circle-contact representation~\cite{Koebe36}. A drawback of such representations is that the sizes of the circles may vary exponentially, making the resulting drawings difficult to read. In \emph{balanced} circle packings and circle-contact representations, the ratio of the maximum and minimum diameters for the set of circles is polynomial in the number of vertices in the graph. Such drawings could be drawn with polynomial area, for instance, where the smallest circle determines the minimum resolution. It is known that trees and planar graphs with bounded tree-depth have balanced circle-contact representation~\cite{alam2014balanced}. Breu and Kirkpatrick~\cite{Breu19983} show that it is NP-complete to test whether a graph has a perfectly-balanced circle-contact representation, in which all circles have the same size, i.e., they are unit disks. 

Very recently, Di Giacomo et al.~\cite{ply-original} showed that binary trees, stars, and caterpillars have drawings with ply-number $2$ (with exponential area, that is, the ratio of the longest to the shortest edge is exponential in the number of vertices), while general trees with height $h$ admit drawings with ply-number $h+1$. Also, they showed that the class of graphs with ply-number $1$ coincides with the class of graphs that have a weak contact representation with unit disks, which makes the recognition problem NP-hard for general graphs~\cite{DBLP:conf/gd/FeketeHW97}. On the other hand, testing whether an internally triangulated biconnected planar graph has ply-number $1$ can be done in $O(n \log n)$ time. This paper left several natural questions open. Of particular interest are the following two questions: 

\begin{enumerate}[(i)]
\item \emph{Is it possible to draw a binary tree, a star, or a caterpillar in polynomial area with ply-number $2$?}
\item \emph{While binary trees have constant ply-number, is this true also for trees with larger bounded degree?}
\end{enumerate}
 
In this paper we provide answers to the two above questions (Section~\ref{se:area-lower-bound}). 
For the first question, we prove an exponential lower bound on the area requirements of drawings with constant ply-number of stars, and hence of caterpillars.
For the second question, we prove that there exist trees with maximum degree $11$ that do not have constant ply-number.
Motivated by these two negative results, we consider in Section~\ref{se:algorithms} drawings of trees with logarithmic ply-number. In this case, we present an algorithm to construct a drawing of every tree with maximum degree $6$ in polynomial area\footnote{The \emph{area} of a drawing is the area of the smallest axis-aligned rectangle containing it, under the resolution rule that each edge has length at least $1$.}.
We give preliminary definitions in Section~\ref{se:preliminaries} and discuss some open problems in Section~\ref{se:conclusions}.

\section{Preliminaries}\label{se:preliminaries}

Let $G$ be a graph. We denote by $\ell(e)$ (by $\ell(u,v)$) the length of an edge $e \in G$ (an edge $(u,v) \in G$) in a straight-line drawing of $G$. Also, for a path $P=v_1,\dots,v_m$, we denote by $\ell(P) = \sum_{i=1}^{m-1} \ell(v_i,v_{i+1})$ the total length of its edges.
Further, we denote by $D_v$ the ply-disk in $\Gamma$ of a vertex $v \in G$ and by $r_v$ the radius of $D_v$. Finally, we call \emph{constant-ply drawing} (or \emph{log-ply drawing}) a straight-line drawing $\Gamma$ such that $\pn(\Gamma) = O(1)$ (such that $\pn(\Gamma) = O(\log n)$).

Let $T$ be a tree rooted at a vertex $r$. The \emph{depth} $d_v$ of a vertex $v \in T$ is the length of the path between $v$ and $r$; note that $d_r=0$. The \emph{height} $h$ of $T$ is the maximum depth of a vertex of $T$.

\section{Constant-Ply Drawings of Trees}\label{se:area-lower-bound}

In this section we provide negative answers to two open questions~\cite{ply-original} about constant-ply drawings of trees. In Subsection~\ref{sse:area} we prove that drawings of this type may require exponential area, even for stars, while in Subsection~\ref{sse:10ary} we prove that there exist bounded-degree trees not admitting any of such drawings.

\subsection{Area Lower Bound for Stars}\label{sse:area}

In the original paper on the topic~\cite{ply-original}, it has been shown that a star admits a drawing with ply-number $1$ if and only if it has at most six leaves, and that every star admits a drawing with ply-number $2$, independently of the number of leaves. The algorithm for the latter result is based on a placement of the leaves at exponentially-increasing distances from the central vertex, which results in a drawing with exponential area; see Figure~\ref{fig:star-algorithm}. In this subsection we prove that this is in fact unavoidable, as we give an exponential lower bound for the area requirements of any drawing of a star with constant ply-number.

\begin{figure}[tb!]
	\centering
	\subfloat[\label{fig:star-algorithm}]{
		\includegraphics[width=0.33\textwidth,page=1]{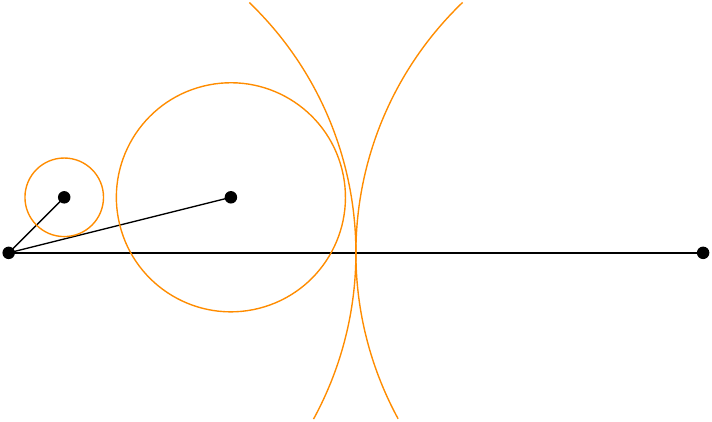}
	}\hfil
	\subfloat[\label{fig:star-lower-bound}]{
		\includegraphics[width=0.23\textwidth,page=2]{images/star.pdf}
	}
	\caption{(a) An exponential-area drawing with ply-number $2$ of a star~\cite{ply-original}. (b) Illustration for the proof of Theorem~\ref{th:star-area-lower-bound}; two disks belonging to class $\mathcal{T}_3$ are entirely contained inside annuli $\mathcal{A}_3$ and $\mathcal{A}_4$. The number close to each disk is its radius.}
	\label{fig:star}
\end{figure}

\begin{theorem}\label{th:star-area-lower-bound}
Any constant-ply drawing of an $n$-vertex star has exponential area.
\end{theorem}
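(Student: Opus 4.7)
The plan is to classify the leaves of the star by the dyadic scale of their distance from the center $c$, and then use an area-packing argument inside a bounded annulus to show that each scale can contain only constantly many leaves. If only $O(1)$ leaves fit per scale, then accommodating $n-1$ leaves forces $\Omega(n)$ distinct scales, which in turn forces the longest edge (and hence the bounding box) to have exponential length.

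More concretely, let $\Gamma$ be a drawing of the $n$-star with ply-number $k = O(1)$, and assume without loss of generality that the shortest edge has length at least $1$. For each integer $i \geq 0$, define the class $\mathcal{T}_i$ of leaves $v$ with $\ell(c,v) \in [2^i, 2^{i+1})$. A leaf $v \in \mathcal{T}_i$ has only one incident edge, so its ply-disk $D_v$ has radius $r_v = \ell(c,v)/2 \in [2^{i-1}, 2^i)$ and is centered at distance at most $2^{i+1}$ from $c$. Consequently every such $D_v$ lies inside the annulus $\mathcal{A}_i$ of inner radius $0$ (or $2^{i-1}$) and outer radius $3 \cdot 2^i$ around $c$, whose area is $O(2^{2i})$. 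Since each such disk has area $\Omega(2^{2i})$ and the ply is at most $k$, a standard area counting argument (integrating the indicator functions of the disks and using that the sum is bounded pointwise by $k$) yields $|\mathcal{T}_i| \cdot \Omega(2^{2i}) \leq k \cdot O(2^{2i})$, hence $|\mathcal{T}_i| = O(k) = O(1)$.

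Having bounded the size of each class by a constant, the star with $n-1$ leaves occupies at least $\Omega(n)$ nonempty classes. Therefore the largest index $i^*$ with $\mathcal{T}_{i^*} \neq \emptyset$ satisfies $i^* = \Omega(n)$, so the longest edge has length at least $2^{\Omega(n)}$. Since every edge has length at least $1$, the bounding box of $\Gamma$ has side length $2^{\Omega(n)}$, and therefore area $2^{\Omega(n)}$, as required.

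The main obstacle I expect is making the area-packing step genuinely clean: one must verify that ply-disks from class $\mathcal{T}_i$ are really confined to an annulus whose area is within a constant factor of a single disk's area, and that the pointwise ply bound gives the sharp $|\mathcal{T}_i| = O(k)$ conclusion rather than a weaker logarithmic dependence. A minor technical point is that consecutive classes $\mathcal{T}_i$ and $\mathcal{T}_{i+1}$ share part of their annuli, but this only changes the constant in the bound on $|\mathcal{T}_i|$; as suggested by Figure~\ref{fig:star-lower-bound}, it suffices to enlarge the annulus by a constant factor to absorb this overlap. Everything else is routine once the dyadic scale decomposition is in place.
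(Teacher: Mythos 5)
Your proposal is correct and follows essentially the same strategy as the paper's proof: partition the leaves' ply-disks into geometrically scaled classes (you use powers of $2$ and classify by edge length, the paper uses powers of $3$ and classifies by disk radius, which for a star is the same thing), confine each class to an annulus of comparable area, and use the pointwise ply bound to deduce that each class has $O(p)$ members, forcing $\Omega(n)$ scales and hence an exponential longest-to-shortest edge ratio. The area-packing step you flag as the main obstacle is exactly the inequality the paper uses, and it goes through as you describe.
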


\begin{proof}
Let $K_{1,n-1}$ be an $n$-vertex star with central vertex $v$, and let $\Gamma$ be a straight-line drawing of $K_{1,n-1}$ with ply-number $p$, where $p=O(1)$. We prove the statement by showing that the ratio of the longest to the shortest edge in $\Gamma$ is exponential in $n$. Assume that the longest edge $e$ of $\Gamma$ has length $\ell(e)=2$, after a possible scaling of $\Gamma$; thus, the largest ply-disk in $\Gamma$ has radius $1$.

For any $i \in \mathbb{N}$ we define $\mathcal{A}_i$ to be the annulus delimited by two circles centered at $v$ with radius $3^{-i+2}$ and $3^{-i+1}$, respectively. Refer to Figure~\ref{fig:star-lower-bound}. Then, we partition the ply-disks of the $n-1$ leaves of $K_{1,n-1}$ into the classes $\mathcal{T}_1, \dots, \mathcal{T}_k$ in such a way that all the disks with radius in $(3^{-j}, 3^{-j+1}]$ belong to $\mathcal{T}_j$, with $1 \leq j \leq k$. We observe that every disk in class $\mathcal{T}_j$ is entirely contained inside the annulus $\mathcal{A}_j \cup \mathcal{A}_{j+1}$; see Figure~\ref{fig:star-lower-bound}. However, there can be at most
$$
	\frac{p|\mathcal{A}_j \cup \mathcal{A}_{j+1}|}{\min_{D \in \mathcal{T}_j}|D|} \leq \frac{p(\pi 3^{-2j+4} - \pi 3^{-2j})}{\pi 3^{-2j}} = 80p
$$
disks in any $\mathcal{A}_j \cup \mathcal{A}_{j+1}$, and hence at most $80p$ disks belong to class $\mathcal{T}_j$. Therefore, $n = 1 + \sum_{j=1}^k |T_j| \leq 80pk$ implies that the smallest radius of the ply-disk of a vertex in a drawing is at most $3^{-k}$. This implies that the ratio between the largest and the smallest ply-disk radii in $\Gamma$, and hence between the longest and the shortest edge, is at least $3^{k} \geq 3^{n/(80p)}$. This concludes the proof.
\end{proof}


\subsection{Large Bounded-Degree Trees}\label{sse:10ary}

In this section we consider the question posed in~\cite{ply-original} on whether bounded-degree trees admit constant-ply drawings. While the answer is positive for binary trees~\cite{ply-original}, as they admit drawings with ply-number $2$, we prove that this positive result cannot be extended to all bounded-degree trees, and in particular to \emph{$10$-ary trees}, that is, rooted trees with maximum degree $11$.

\def\rot{\text{root}}

In the following we denote a complete $10$-ary tree of height $h$ by $T_{10}^h$; note that $T_{10}^h$ has $10^h$ leaves and $10^d$ vertices with depth $d \leq h$. The root of a tree $T$ is denoted by $\rot(T)$. In the rest of the section we prove the following theorem.

\begin{theorem}\label{t:10-ary trees}
For every $M>0$ there is an integer $h>0$ such that $\pn(T_{10}^h) \geq M$.
\end{theorem}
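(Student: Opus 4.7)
I would argue by contradiction: assume that there is a constant $M$ with $\pn(T_{10}^h) \le M$ for every $h$, and take a straight-line drawing $\Gamma$ of $T_{10}^h$ with ply at most $M$, normalized so that the longest edge has length $2$. The goal is to exhibit, for $h$ large enough, a point in the plane covered by more than $M$ ply-disks.

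The key local tool, in the spirit of Theorem~\ref{th:star-area-lower-bound}, is an angular pigeonhole combined with the ply constraint. At any internal vertex $v$, its $10$ children are distributed around $v$, so some two consecutive children $c_i,c_j$ (in the cyclic angular ordering around $v$) appear within angular width at most $36^\circ$ as seen from $v$. Since $r_v = L/2$, where $L$ is the length of the longest edge at $v$, and $r_{c_i} \ge \ell(v,c_i)/2$, $r_{c_j} \ge \ell(v,c_j)/2$, this angular closeness forces quantitative overlap of $D_v$, $D_{c_i}$, and $D_{c_j}$ in a controlled region; the precise bounds are obtained by a packing estimate similar to the one used in Theorem~\ref{th:star-area-lower-bound}.

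The global bound would be obtained by iterating this local constraint. Starting at the root, the plan is to construct a descending sequence $v_0, v_1, \dots, v_k$ in the tree, together with a \emph{witness point} $p$ in the plane such that $D_{v_j}$ contains $p$ for every $j \le k$. At each step the angular pigeonhole at $v_i$ guides the choice of $v_{i+1}$ among the $10$ children so that $D_{v_{i+1}}$ still covers $p$ (or a controlled perturbation thereof). If the descent can be carried for more than $M$ steps, we have more than $M$ ply-disks through a common point, contradicting the ply bound; taking $h$ sufficiently large in terms of $M$ then finishes the argument.

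The main obstacle is maintaining the witness point along the descent. The angular pigeonhole provides geometric proximity of two children but not automatic containment of a single common point in all their ply-disks, and so the witness may have to be adjusted from step to step; one has to show that such adjustments can be carried out without losing coverage by the ply-disks of the previously chosen $v_0, \dots, v_i$. This is where the branching number $10$ becomes essential: with fewer children the angular wedge would be too wide to propagate the invariant, while with $10$ the pigeonhole is sharp enough to force the number of overlapping ply-disks at the witness to grow without bound as $h$ increases.
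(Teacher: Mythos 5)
There is a genuine gap here, on two levels. First, the local step does not work as claimed: an angular pigeonhole among the $10$ children of $v$ gives two children $c_i,c_j$ within $36^\circ$, but nothing controls the ratio $\ell(v,c_i)/\ell(v,c_j)$. If one child is at distance $\ell$ and the other at distance $\ell/100$ in nearly the same direction, the disks $D_{c_i}$ and $D_{c_j}$ (each of radius at least half its own edge) need not meet at all; and even when they do meet, their intersection lies near distance $\ell\cos(18^\circ)\approx 0.95\ell$ from $v$, which is outside $D_v$ unless $v$ has an incident edge of length close to $2\ell$. So the triple overlap of $D_v,D_{c_i},D_{c_j}$ "in a controlled region" is not forced. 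Second, and more seriously, the global plan of descending along a single root-to-leaf path while maintaining a common witness point cannot succeed in general: a single path can always be drawn with ply-number $2$ (this is exactly Lemma~\ref{le:2-drawing-ply-2}), and an adversarial drawing can make every one of the $10$ choices at each vertex equally bad. You name this as "the main obstacle" but do not resolve it, and it is precisely where the entire difficulty of the theorem lives.

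The paper's proof is structured around this obstacle. It introduces a domination relation between edges on a branch ($e>_D f$ when $\ell(e)\ge 3^{s+1}\ell(f)$): a chain of $M$ first-hand dominating edges does yield $M$ disks through a common point (Lemmas~\ref{le:onedominatingedge} and~\ref{le:Medges}) --- this is the only situation in which your "witness point along a descent" picture is valid, and it corresponds to edge lengths shrinking geometrically. When no such chain exists, the argument becomes global: either some whole subtree is dominated by the path above it and one recurses with $\pn(T)\ge\pn(T')+1$ (Lemma~\ref{l:dominatedsubtree}), or else each of the $10^{\overline{h}-h}$ vertices at a fixed depth contributes an undominated edge whose length is pinned to within a factor $3$ of a common value and which is confined to a disk of radius about $3^{(\overline{h}+k)/2}$ around the root; a packing estimate then forces roughly $(10/9)^{\overline{h}}/\mathrm{poly}$ ply-disks through one point. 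The branching number $10$ enters because $10>3^2$ in this area computation, not through a $36^\circ$ angular bound. To repair your argument you would need, at minimum, a replacement for the descent that handles drawings in which consecutive edge lengths along every branch stay within bounded ratio, and that is where a counting argument over exponentially many branches appears unavoidable.
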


%




A \emph{branch of $T_{10}^h$} is a path in $T_{10}^h$ connecting the root with a leaf. 
Let $e$ and $f$ be two edges of $T_{10}^h$. Refer to Figure~\ref{fig:edge-dominates}. We say that $e$ \emph{dominates} $f$ and write $e>_D f$, if $e$ and $f$ lie on a common branch and $\ell(e) \geq 3^{s+1}\ell(f)$, where $s$ is the number of edges on the path between $e$ and $f$ different from $e$ and $f$. Observe that on each branch of $T_{10}^h$ the relation $>_D$ is transitive. We say that $e$ \emph{first-hand dominates} $f$ and write $e>_{\text{FD}} f$, if
the following three conditions are satisfied:
(i) $f$ lies on the path connecting $e$ with the root of $T_{10}^h$, (ii) $e$ dominates $f$, and (iii) no other edge on the path between $e$ and $f$ dominates $f$.

\begin{figure}[tb!]
	\centering
	\subfloat[\label{fig:edge-dominates}]{
		\includegraphics[height=3cm,page=1]{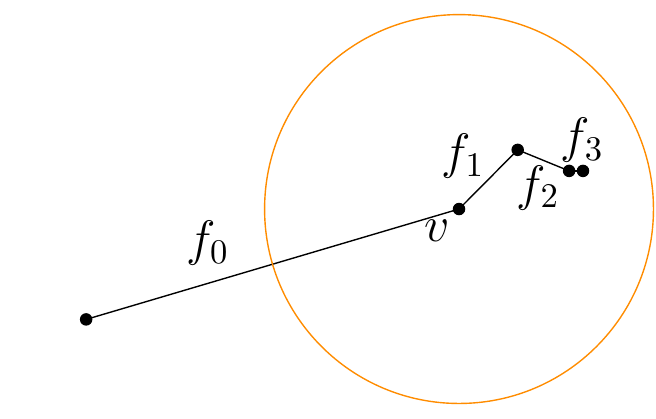}
	}\hfil
	\subfloat[\label{fig:dominate-subtree}]{
		\includegraphics[height=3cm,page=2]{images/general-trees.pdf}
	}
	\caption{(a) A path $P$ with $\ell(f_0)=28$, $\ell(f_1)=6$, $\ell(f_2)=4$, and $\ell(f_3)=1$. Edge $f_0$ dominates each of $f_1$, $f_2$, and $f_3$, which in fact lie inside $D_v$. Also, $f_0$ first-hand dominates $f_1$ and $f_2$, but does not first-hand dominate $f_3$, since $f_2$ dominates $f_3$. (b) Illustration for the proof of Lemma~\ref{l:dominatedsubtree}.}
	\label{fig:dominates}
\end{figure}

\begin{lemma}\label{le:onedominatingedge}
	Let $P$ be a path with edges $f_0,f_1,\dots,f_p$. Suppose that $f_0$ dominates each of the edges $f_1,\dots,f_p$. Let $v$ be the common vertex of the edges $f_0$ and $f_1$. Then the edges $f_1,\dots,f_p$ lie entirely inside the ply-disk $D_v$ of $v$.
\end{lemma}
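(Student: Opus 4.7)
The plan is to prove the lemma by bounding distances along the path from $v$ using a geometric series argument. Let me denote the vertices of the path as $v = u_0, u_1, \ldots, u_{p+1}$, so that $f_0 = (u_0', v)$ for some vertex $u_0'$ preceding $v$ and $f_i = (u_{i-1}, u_i)$ for $i = 1, \ldots, p+1$ (I'm slightly renaming; the key fact is that $f_1, \ldots, f_p$ emanate from $v$ as a simple subpath, with $v$ at one end).

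\smallskip\noindent\textbf{Step 1: Lower-bound $r_v$.} Since $f_0$ is incident to $v$, the radius of the ply-disk $D_v$ satisfies $r_v \geq \ell(f_0)/2$, directly from the definition of the ply-disk.

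\smallskip\noindent\textbf{Step 2: Use the domination hypothesis to bound each $\ell(f_i)$.} For every $i \in \{1, \ldots, p\}$, the number of edges on the subpath from $f_0$ to $f_i$, other than $f_0$ and $f_i$ themselves, is exactly $i - 1$. Thus $f_0 >_D f_i$ gives $\ell(f_0) \geq 3^i \, \ell(f_i)$, i.e.,
\[
\ell(f_i) \;\leq\; \frac{\ell(f_0)}{3^i}.
\]

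\smallskip\noindent\textbf{Step 3: Sum the lengths with a geometric series.} Summing over $i = 1, \ldots, p$,
\[
\sum_{i=1}^{p} \ell(f_i) \;\leq\; \ell(f_0) \sum_{i=1}^{p} 3^{-i} \;<\; \ell(f_0) \cdot \frac{1/3}{1 - 1/3} \;=\; \frac{\ell(f_0)}{2} \;\leq\; r_v.
\]

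\smallskip\noindent\textbf{Step 4: Conclude by the triangle inequality and convexity.} Every endpoint of every edge $f_i$ with $i \in \{1, \ldots, p\}$ lies on the polygonal path starting at $v$ along $f_1, f_2, \ldots$, so its Euclidean distance from $v$ is at most the partial sum $\sum_{j=1}^{i} \ell(f_j)$, which by Step 3 is strictly less than $r_v$. Thus both endpoints of each $f_i$ lie inside the open disk $D_v$, and by convexity of $D_v$ the entire straight-line segment $f_i$ lies inside $D_v$.

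\smallskip\noindent\textbf{Main obstacle.} There is essentially no serious obstacle: the result is a clean geometric series calculation once the exponent $s = i-1$ in the definition of domination is correctly identified. The only mild care needed is to use the \emph{strict} inequality from summing the tail of the geometric series (since $D_v$ is an open disk) and to invoke convexity so that bounding endpoint distances suffices to place the whole segment inside $D_v$.
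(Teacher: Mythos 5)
Your proof is correct and follows essentially the same argument as the paper: lower-bound $r_v$ by $\ell(f_0)/2$, use the domination definition to get $\ell(f_i)\leq \ell(f_0)/3^i$, and sum the geometric series to stay strictly below $\ell(f_0)/2$. Your Step 4 merely makes explicit the triangle-inequality/convexity conclusion that the paper leaves implicit.
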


\begin{proof}
See Figure~\ref{fig:edge-dominates}. Since the radius of $D_v$ is at least $\frac{\ell(f_0)}{2}$, it suffices to prove 
	$\ell(f_1)+\dots+\ell(f_p)<\frac{\ell(f_0)}{2}$. Let $i\in\{1,\dots,p\}$. Since $f_0$ dominates $f_i$, we have $\ell(f_i) \leq \frac{\ell(f_0)}{3^i}$.
	Thus,
	$\ell(f_1)+\dots+\ell(f_p) \leq \ell(f_0) (\frac{1}{3} + \frac{1}{3^2}+\cdots+\frac{1}{3^p}) < \frac{\ell(f_0)}{2}.$
\end{proof}

\begin{lemma}\label{le:Medges}
	Let $e_1,\dots,e_M$ be $M$ edges in $T_{10}^h$ such that $e_1 >_\text{FD} e_2 >_\text{FD} \cdots >_\text{FD} e_M$. Then, $\pn(T_{10}^h)\geq M.$
\end{lemma}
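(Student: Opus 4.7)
The plan is to exhibit a single point that lies in $M$ distinct ply-disks of any straight-line drawing $\Gamma$ of $T_{10}^h$. For each $i$, let $v_i$ denote the endpoint of $e_i$ closer to the root. Because every $e_{i+1}$ lies on the path from $e_i$ to the root, the edges $e_1,\dots,e_M$ all lie on a common branch, and the vertices $v_1,\dots,v_M$ are pairwise distinct (each being the unique ``upper'' endpoint of its edge on the branch). I will show that the drawing point of $v_M$ belongs to every disk $D_{v_i}$; this immediately yields $\pn(\Gamma)\ge M$.

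For $i=M$ this is trivial, as $v_M$ is the center of the open disk $D_{v_M}$. For $i<M$, let $P_i$ be the subpath of the branch from $v_i$ to $v_M$. I plan to apply Lemma~\ref{le:onedominatingedge} with $f_0=e_i$ and $f_1,\dots,f_p$ the edges of $P_i$ listed in order starting from the one incident to $v_i$; the lemma then places $P_i$, and in particular $v_M$, inside $D_{v_i}$. The only real work is to verify its hypothesis, namely that $e_i$ dominates every edge of $P_i$.

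This verification splits into two cases. If the edge is some $e_j$ with $j>i$, transitivity of $>_D$ along a branch (noted in the excerpt) gives $e_i>_D e_j$. If instead it is an intermediate edge $g$ between $e_j$ and $e_{j+1}$ for some $i\le j<M$, I will combine condition~(iii) of $e_j>_{\text{FD}} e_{j+1}$, which gives $\ell(g)<3^{t+1}\ell(e_{j+1})$ with $t$ the number of edges strictly between $g$ and $e_{j+1}$, together with $e_i>_D e_{j+1}$, which gives $\ell(e_i)\ge 3^{r+1}\ell(e_{j+1})$ with $r$ the number of edges strictly between $e_i$ and $e_{j+1}$. Using the combinatorial identity $r = s+t+1$, where $s$ counts edges strictly between $e_i$ and $g$, the powers of three cancel and yield $\ell(g)<\ell(e_i)/3^{s+1}$, which is exactly the domination $e_i>_D g$.

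The main obstacle I foresee is precisely this propagation step: first-hand domination only locally bounds intermediate edges against the next edge of the chain, whereas the hypothesis of Lemma~\ref{le:onedominatingedge} requires control relative to $e_i$ itself, possibly many segments away. Transitivity of $>_D$ handles the distinguished edges $e_j$ for free, and the short $3$-adic cancellation above transfers the local control to every intermediate edge, after which Lemma~\ref{le:onedominatingedge} delivers the geometric conclusion uniformly.
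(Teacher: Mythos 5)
Your proof is correct and takes essentially the same route as the paper's: show that each $e_i$ dominates every edge following it on the branch, then apply Lemma~\ref{le:onedominatingedge} to place $v_M$ inside every $D_{v_i}$. The only difference is that you spell out explicitly the $3$-adic cancellation showing that condition~(iii) of first-hand domination forces $e_i$ to dominate the intermediate edges, a step the paper asserts without computation.
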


\begin{proof}
	By definition, $e_1,\dots,e_M$ appear in this order, possibly not consecutively, along the same branch of $T_{10}^h$. Let $\overrightarrow{P}$ be the oriented path that is the subpath of this branch from $e_1$ to $e_M$.
	Since $e_i >_\text{FD} e_{i+1}$, edge $e_i$ dominates all the edges between $e_i$ and $e_{i+1}$. Due to the transitivity of $>_\text{D}$, each edge $e_i$ dominates all the edges $e_{i+1},\dots,e_M$, and hence all the edges appearing after it along $\overrightarrow{P}$.
	
	By Lemma~\ref{le:onedominatingedge}, the endvertex $v_M$ of $e_M$ lies inside the ply-disk $D_{v_i}$ of $v_i$, for each $i=1,\dots,M$, where $v_i$ is the last vertex of $e_i$ along $\overrightarrow{P}$. Thus, the $M$ disks $D_{v_1},\dots,D_{v_M}$ have a non-empty intersection, and the statement follows.
\end{proof}

Consider a vertex $v$ with depth $d$ in $T_{10}^h$. We say that a vertex $u \neq v$ is a \emph{descendant} of $v$ if the path from $\rot(T_{10}(h))$ to $u$ contains $v$. For any $i = 1, \dots, h-d$, we denote by $T_{10}^{i}(v)$ the subtree of $T_{10}^h$ rooted at $v$ induced by $v$ and by all the descendants of $v$ with depth $d+1,d+2,\dots,d+i$. Note that $T_{10}^{i}(v)$ is a $10$-ary tree of height $i$, thus it has $10^{i}$ leaves. We have the following.

\begin{lemma}\label{l:dominatedsubtree}
	Let $T'$ be a subtree of a rooted $10$-ary tree $T$ and let $P$ be the path from $\rot(T)$ to $\rot(T')$. If every edge of $T'$ is dominated by at least one edge of $P$, then there exists a vertex $v \in P$ such that $T'$ lies completely inside $D_v$.
	
	Consequently, $\pn(T)\geq \pn(T')+1$.
\end{lemma}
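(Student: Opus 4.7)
My plan is to identify one edge $e^\star\in P$ that dominates every edge of $T'$ simultaneously, and to take $v^\star$ as its endpoint closer to $\rot(T')$; I will then show that $D_{v^\star}$ engulfs all of $T'$.

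To find $e^\star$, I rewrite ``$e_i$ dominates $f$'' logarithmically as $a_i\ge\beta_f$, where $a_i:=\log_3\ell(e_i)+i$ ranges over the edges of $P$ (indexed $1,\dots,k$ from $\rot(T)$) and $\beta_f:=k+t+\log_3\ell(f)$ ranges over edges $f\in T'$ of $T'$-depth $t$. The set of dominators of a fixed $f$ is an upper level-set in the order of $a_{\bullet}$, and by hypothesis each such set is non-empty, so $\max_i a_i\ge\max_f\beta_f$; any $e_i$ attaining $\max_i a_i$ then dominates every edge of $T'$ simultaneously. Pick $e^\star=e_{i^\star}$ with $i^\star$ the \emph{largest} such index, and set $v^\star:=v_{i^\star}$.

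Maximality of $i^\star$ forces that for each $j>i^\star$ there is some $f_j\in T'$ not dominated by $e_j$; combined with $e^\star$ dominating $f_j$ this yields $\ell(e_j)<\ell(e^\star)\cdot 3^{i^\star-j}$. Separately, $e^\star$ dominating any $f\in T'$ at depth $t$ yields $\ell(f)\le\ell(e^\star)/3^{k-i^\star+t}$. The path length in $T$ from $v^\star$ to any vertex of $T'$---first along $e_{i^\star+1},\dots,e_k$ in $P$, then along a path inside $T'$---is then the sum of two geometric series that telescope to strictly less than $\ell(e^\star)/2\le r_{v^\star}$. By convexity of disks, $T'\subset D_{v^\star}$, establishing the main claim. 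The consequence $\pn(T)\ge\pn(T')+1$ follows because $D_{v^\star}$ is one extra ply-disk covering all of $T'$ beyond the disks present in the induced drawing of $T'$; a bit of care is needed when $v^\star=\rot(T')$, where one uses instead the strict inclusion of $D_{v^\star}$ in the $T$-drawing over the corresponding disk in the $T'$-drawing, which is forced by $e^\star$ being much longer than any incident $T'$-edge.

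The hardest step is the telescoping estimate: the two geometric series must combine to strictly less than $\ell(e^\star)/2$, a razor-thin margin that only closes because the maximal choice of $i^\star$ forces the exponential decay of the $\ell(e_j)$'s below $e^\star$ to dovetail precisely with the decay of the $\ell(f)$'s inside $T'$.
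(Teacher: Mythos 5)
Your proof of the main containment is essentially the paper's: the index $i^\star$ maximizing $a_i=\log_3\ell(e_i)+i$ is exactly the paper's choice of the edge maximizing $3^i\cdot\ell(e_i)$, the observation that this one edge dominates every later edge of $P$ and every edge of $T'$ is the same, and your telescoping of the two geometric series to just under $\ell(e^\star)/2$ is the paper's Lemma~\ref{le:onedominatingedge} carried out inline (your choice of $v^\star$ as the endpoint of $e^\star$ nearer $\rot(T')$ is in fact the endpoint that Lemma~\ref{le:onedominatingedge} supports). One cosmetic flaw: your justification of $\ell(e_j)<\ell(e^\star)\cdot 3^{i^\star-j}$ for $j>i^\star$ via ``some $f_j\in T'$ not dominated by $e_j$'' is not sound---nothing prevents a non-maximizing $e_j$ from dominating all of $T'$---but the inequality you need follows at once from $a_j<a_{i^\star}$, which is strict because $i^\star$ is the largest maximizer, so this is easily repaired.

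The ``Consequently'' part is where your sketch has a real gap. What is needed is that a point $q$ witnessing ply $\pn(T')$ in the induced drawing of $T'$ lies in $D_{v^\star}$, and that $D_{v^\star}$ is not one of the disks already counted at $q$. The first half does follow from the same telescoping (every induced ply-disk of a $T'$-vertex, not just the drawn edges, is contained in $D_{v^\star}$), but you should say this explicitly, since ``$D_{v^\star}$ covers all of $T'$'' by itself only covers the drawing, and the witness point may lie outside it. More importantly, your patch for the case $v^\star=\rot(T')$ does not work: if $q$ already lies in the induced ply-disk of $\rot(T')$, then replacing that disk by the strictly larger $D_{\rot(T')}$ of the $T$-drawing contributes no additional disk at $q$, and the count stays at $\pn(T')$; strict inclusion of one disk in another is not an extra unit of ply. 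A genuinely different disk containing $q$ would be needed in that case, and your argument does not produce one. To be fair, the paper states the ``Consequently'' without proof and is itself imprecise about which endpoint of $e_{i^\star}$ it takes, so you are no worse off than the source---but the sketch should not present this corner case as handled when the stated fix does not close it.
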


\begin{proof}
	Refer to Figure~\ref{fig:dominate-subtree}. Let $e_0,e_1,\dots,e_t$ be the edges of $P$ in the order in which they appear along $P$, when $P$ is oriented from $\rot(T)$ to $\rot(T')$. Let $i$ be an index maximizing the value of $3^i\cdot\ell(e_i)$. Then $e_i$ dominates all the edges $e_{i+1},e_{i+2},\dots,e_t$. Also, due to the choice of $i$ and since any edge of $T'$ is dominated by some edge of $P$, any edge of $T'$ is dominated by $e_i$. Let $v$ be the root of $T$, if $i=0$, or the common vertex of $e_{i-1}$ and $e_{i}$ otherwise. Then Lemma~\ref{le:onedominatingedge} can be applied on the path from $v$ to any leaf of $T'$ to show that its subpath from $\rot(T')$ to the leaf lies inside $D_v$, which proves the statement.
	
	As a consequence, we have $\pn(T)\geq \pn(T')+1$. 
\end{proof}

\begin{lemma}\label{l:induction}
	Let $h,h',M$ be three positive integers such that $h'\geq h(M-1)+1$. If there exists a drawing $\Gamma$ of $T_{10}^{h'}$ that contains no $M$ edges $e_1,\dots,e_M$ such that $e_1 >_\text{FD} e_2 >_\text{FD} \cdots >_\text{FD} e_M$, then there exists a vertex $v$ in $T_{10}^{h'}$ with depth $1\leq d_v \leq h'-h$ such that no edge of $T^h_{10}(v)$ in $\Gamma$ dominates the edge $(v,v')$, where $v'$ is the neighbor of $v$ with depth $d_v-1$. Refer to Figure~\ref{fig:dominate-induction}.
\end{lemma}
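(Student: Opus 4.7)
The plan is to prove the contrapositive. Suppose that for every vertex $u$ of $T_{10}^{h'}$ with depth $1\le d_u\le h'-h$ there is at least one edge of $T_{10}^h(u)$ dominating the edge from $u$ to its parent, and from this I will construct a sequence $e_1>_{\text{FD}} e_2>_{\text{FD}}\cdots>_{\text{FD}} e_M$ of $M$ edges in $\Gamma$, contradicting the hypothesis of the lemma.

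I would build the chain top-down. First, pick any child $v_M$ of $\rot(T_{10}^{h'})$ and let $e_M$ be the edge $(v_M,\rot(T_{10}^{h'}))$, so $d_{v_M}=1$. Then, for $i=M,M-1,\dots,2$, write $e_i=(v_i,v'_i)$ with $v_i$ the endpoint of $e_i$ farther from $\rot(T_{10}^{h'})$, and maintain the invariant $d_{v_i}\le 1+(M-i)h$. The hypothesis $h'\ge h(M-1)+1$ then gives $d_{v_i}\le 1+(M-2)h\le h'-h$ for every $i\ge 2$, so the assumption applies at $v_i$ and yields a non-empty family $\mathcal{F}_i$ of edges of $T_{10}^h(v_i)$ that dominate $e_i$. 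I would select as $e_{i-1}$ an edge $f\in\mathcal{F}_i$ minimizing the number of edges strictly between $f$ and $e_i$ along their common branch, and let $v_{i-1}$ be the endpoint of $e_{i-1}$ farther from $\rot(T_{10}^{h'})$. Because $e_{i-1}\in T_{10}^h(v_i)$, we get $d_{v_{i-1}}\le d_{v_i}+h\le 1+(M-i+1)h$, so the invariant is preserved.

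It remains to check that $e_{i-1}>_{\text{FD}} e_i$. Condition (i) of first-hand domination is immediate, since $e_{i-1}$ lies in the subtree rooted at $v_i$ and hence $e_i$ lies on the unique path from $e_{i-1}$ to $\rot(T_{10}^{h'})$. Condition (ii) holds by the choice $e_{i-1}\in\mathcal{F}_i$. Condition (iii) follows from minimality: any edge strictly between $e_{i-1}$ and $e_i$ on their common branch lies in $T_{10}^h(v_i)$, and if it dominated $e_i$ it would contradict the minimality in the choice of $e_{i-1}$. Finally, every edge of $T_{10}^h(v_i)$ has both endpoints at depth $\ge d_{v_i}$, whereas the upper endpoint $v'_i$ of $e_i$ has depth $d_{v_i}-1$, so $e_{i-1}\ne e_i$, and iterating gives $M$ pairwise distinct edges.

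The main obstacle is essentially bookkeeping: one needs the depth invariant to just barely survive the $M-1$ applications of the assumption so that the bound $h'\ge h(M-1)+1$ is used tightly, and one must explicitly use the minimality of the distance between consecutive edges to guarantee condition (iii) in the definition of first-hand domination, rather than relying on an arbitrary dominating edge produced by the assumption.
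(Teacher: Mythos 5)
Your proof is correct and is essentially the paper's argument in a different wrapper: the paper proves the statement by induction on $M$ (peeling off the top $h'-h$ levels and either recursing or extending an $(M-1)$-chain by one edge), whereas you unroll that induction into an explicit top-down construction of the contrapositive. The two key ingredients coincide in both versions --- the depth bookkeeping that consumes the hypothesis $h'\geq h(M-1)+1$ over $M-1$ extension steps, and the choice of the dominating edge nearest to $e_i$ to guarantee condition (iii) of first-hand domination.
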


\begin{proof}
	We fix $h$ and proceed by induction on $M$. If $M=1$, then there exists no drawing $\Gamma$ of $T_{10}^{h'}$ satisfying the conditions of the lemma, and thus the statement holds. Suppose now that $M>1$ and that the lemma holds for $M-1$. We want to show that the lemma holds for $M$. Let $h'\geq h(M-1)+1$. Suppose that a drawing of $T_{10}^{h'}$ contains no $M$ edges $e_1,\dots,e_M$ such that $e_1 >_\text{FD} e_2 >_\text{FD} \cdots >_\text{FD} e_M$. 
	
	\begin{figure}[tb!]
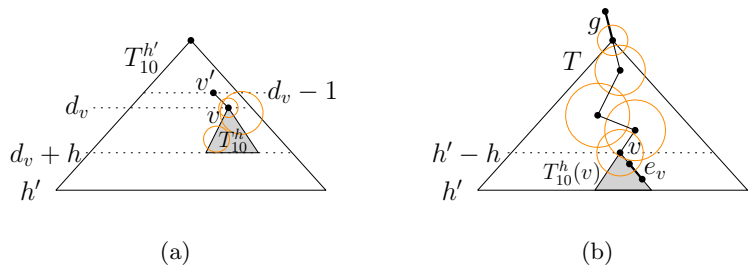

		\centering
		\subfloat[\label{fig:dominate-induction}]{
			\includegraphics[height=3cm,page=3]{images/general-trees.pdf}
		}\hfil
		\subfloat[\label{fig:dominate-case-2}]{
			\includegraphics[height=3cm,page=4]{images/general-trees.pdf}
		}
		\caption{(a) Illustration for Lemma~\ref{l:induction}; no edge of $T^h_{10}(v)$ dominates edge $(v,v')$. (b) Illustration for {\bf Case 2}: for each vertex $v$ with depth $h'-h$, there exists an edge $e_v$ in $T^h_{10}(v)$ that is not dominated by any edge of path from $v$ to the root of $T$.}
		\label{fig:induction}
	\end{figure}
	
	Consider the subtree $T':=T_{10}^{h'-h}(\rot(T_{10}^{h'}))$ of $T_{10}^{h'}$, with the same root as $T_{10}^{h'}$, that is induced by the vertices with depth at most $h'-h$.
	If $T'$ does not contain $M-1$ edges $e_1,\dots,e_{M-1}$ such that $e_1 >_\text{FD} e_2 >_\text{FD} \cdots >_\text{FD} e_{M-1}$, then the required vertex $v$ exists by induction.
	Otherwise, consider $M-1$ edges $e_2,\dots,e_{M}$ in $T'$ such that $e_2 >_\text{FD} e_3 >_\text{FD} \cdots >_\text{FD} e_{M}$. Let $d$ and $d+1$, with $d < h'-h$, be the depth of the endvertices $v'$ and $v$ of $e_2$, respectively, in $T'$ (it is the same depth as they have in $T_{10}^{h'}$). Consider the subtree $T^h_{10}(v)$ of $T_{10}^{h'}$ rooted at $v$. Suppose, for a contradiction, that there exists an edge in $T^h_{10}(v)$ dominating $e_2$. Then, consider the edge $e_1$ in $T^h_{10}(v)$ dominating $e_2$ with the property that no other edge on the path from $e_1$ to $e_2$ dominates $e_2$, that is, $e_1$ first-hand dominates $e_2$. Thus, $e_1 >_\text{FD} e_2 >_\text{FD} \cdots >_\text{FD} e_M$, a contradiction. This implies that no edge of $T^h_{10}(v)$ dominates edge $e_2$, and the statement follows.
\end{proof}

We are now ready to complete the proof of the main result of the section.

\begin{proof}[of Theorem~\ref{t:10-ary trees}]
	We proceed by induction on $M$. For $M=1$ the statement trivially holds.
	
	Suppose now that $M>1$ and that for $M-1$ there is an $h$ with the required properties.
	We need to show that for $M$ there is an $h'$ with the required properties.
	We set $h':=\max\{h^2M,Ch(h+M)\}$, where $C$ is a (large) constant to be specified later. We fix a drawing of $T_{10}^{h'}$.
	
	If there are $M$ edges $e_1,\dots,e_M$ in $T_{10}^{h'}$ such that $e_1 >_\text{FD} e_2 >_\text{FD} \cdots >_\text{FD} e_M$, then Lemma~\ref{le:Medges} implies $\pn(T_{10}^{h'})\geq M$. Otherwise, due to Lemma~\ref{l:induction} there is a rooted $10$-ary subtree $T$ of $T_{10}^{h'}$ with height $\overline{h} \geq \frac{h'}{M}$ such that $\rot(T)\neq \rot(T_{10}^{h'})$ and no edge of $T$ dominates the first edge on the path from $\rot(T)$ to $\rot(T_{10}^{h'})$. From now on, we focus on the rooted tree $T$. In particular, in the following we refer to the depth of a vertex as its depth in $T$.
	We distinguish two cases.
	
	In {\bf Case 1} there exists a vertex $v$ with depth $\overline{h}-h$ in $T$ such that every edge of the tree $T_{10}^{h}(v)$ is dominated by at least one edge of the path from $v$ to $\rot(T)$. In this case, Lemma~\ref{l:dominatedsubtree} (applied on tree $T_{10}^{h}(v)$) and the inductive hypothesis show that $\pn(T_{10}^{h'})\geq \pn(T_{10}^{h}(v))+1\geq M$.
	
	In {\bf Case 2} there exists no vertex in $T$ with the above properties. Refer to Figure~\ref{fig:dominate-case-2}. Thus, for any vertex $v$ with depth $\overline{h}-h$ in $T$, the subtree $T_{10}^{h}(v)$ rooted at $v$ contains at least one edge that is not dominated by any edge of the path from $v$ to $\rot(T)$; among these edges of $T_{10}^{h}(v)$ we choose one, denoted by $e(v)$, whose endvertices have the smallest possible depth. This implies that $e(v)$ is not dominated by any edge of the path $P_v$ from its endvertex $u_v$ to $\rot(T)$.
	
	Let $g$ be the first edge from $\rot(T)$ to $\rot(T_{10}^{h'})$. Note that edges of $P_v$ dominate neither $g$ nor $e(v)$. W.l.o.g., assume $\ell(g)=1$. Since edges $g$ and $e(v)$ do not dominate each other, we have
	$ 1/3^{\overline{h}} < \ell(e(v)) < 3^{\overline{h}}. $
	Thus, there is a unique integer $k(v)\in\{-\overline{h},-\overline{h}+1,\dots,\overline{h}-1\}$ such that $\ell(e(v))\in [3^{k(v)},3^{k(v)+1})$.
	
	Let $k$ be a most frequent value of $k(v)$ over all the vertices $v$ with depth $\overline{h}-h$. Since $k(v)$ may have $2\overline{h}$ different values, the set $V_k$ of vertices $v$ at level $\overline{h}-h$ with $k(v)=k$ has size at least $10^{\overline{h}-h}/(2\overline{h})$. Consider now a vertex $v\in V_k$ and the path $P_v$ from $\rot(T)$ to $u_v$. Since no edge of this path dominates
	$g$ or $e(v)$, we have the following two upper bounds on the length of the $i$-th edge $e_i$ of the path $P_v$ oriented from $\rot(T)$ to $u_v$:	
	$$\ell(e_i) \leq 3^i\ \ \ \text{and}$$
	$$ \ell(e_i) \leq 3^{\overline{h}-i} \cdot \ell(e(v)) < 3^{\overline{h}-i+k+1}.$$
	
	For the latter, we use $\ell(e(v)) < 3^{k+1}$, which follows from the fact that $v\in V_k$. The edges $e_i$ with $i\le (\overline{h}+k)/2$ have total length at most
	$\sum_{i=0}^{\lfloor (\overline{h}+k)/2 \rfloor}3^i\leq 3^{(\overline{h}+k)/2+1}$,
	and the total length of the other edges is at most
	$$\sum_{i=\lfloor (\overline{h}+k)/2+1\rfloor}^{\overline{h}} 3^{\overline{h}-i+k+1}
	=  3^{k+1} \cdot  \sum_{i=\lfloor (\overline{h}+k)/2\rfloor+1}^{\overline{h}} 3^{\overline{h}-i} $$
	$$ =  3^{k+1} \cdot  \sum_{j=0}^{\overline{h}-\lfloor (\overline{h}+k)/2\rfloor-1} 3^j
	\leq 3^{k+1} \cdot 3^{(\overline{h}-k)/2+1}
	= 3^{(\overline{h}+k)/2+2}. $$
	It follows that the total length of the path $P_v$ is smaller than
	$12\cdot 3^{(\overline{h}+k)/2}.$
	
	Thus all the edges $e(v),v\in V_k$, lie in the disk $D$ of radius $12\cdot 3^{(\overline{h}+k)/2}$ centered at $\rot(T)$.
	The area of $D$ is $12^2\pi3^{\overline{h}+k}$. Let $v\in V_k$, and let $u'_v$ be the vertex of the path $P_v$ adjacent to $u_v$. The ply-disk $D_{u'_v}$ contains the disk of radius $3^k/2$ centered at $u'_v$, which is entirely contained in $D$. It follows that the region $D_{u'_v}\cap D$ has area at least $\pi(3^k/2)^2=(\pi/4)3^{2k}$.
	Therefore there is a point of $D$ lying in at least
	$$ \frac{|V_k|\cdot(\pi/4)3^{2k}}{\text{area}(D)}
	\geq \frac{(10^{\overline{h}-h}/(2\overline{h}))\cdot(\pi/4)3^{2k}}{12^2\pi3^{\overline{h}+k}}
	=   \frac{(10/3)^{\overline{h}}/\overline{h}\cdot3^k}{12^2\cdot 8\cdot 10^h}
	\geq \frac{(10/9)^{\overline{h}}/\overline{h}}{12^2\cdot 8\cdot 10^h}$$
	disks $D_{u'_v}$, with $v\in V_k$.
	
	Since $h'\geq CM(h+\log M)$,  we have $\overline{h}\ge C(h+\log M)$. If $C$ is a sufficiently large constant
	then some point of $D$ lies in at least
	$$ \frac{({10/9})^{\overline{h}}/\overline{h}}{12^2\cdot 8\cdot 10^h}\geq M$$
	disks $D_{u'_v}$, with $v\in V_k$, which concludes the proof.
\end{proof}

\section{Log-Ply Drawings of Bounded-Degree Trees in Polynomial Area}\label{se:algorithms}

Motivated by the fact that constant-ply drawings of stars may require exponential area (Theorem~\ref{th:star-area-lower-bound}) and by the fact that not all the bounded-degree trees admit a constant-ply drawing (Theorem~\ref{t:10-ary trees}), in this section we ask whether allowing a logarithmic ply-number makes it possible to always construct drawings of trees, possibly in polynomial area. 
We give a first answer by proving in Theorem~\ref{thm:treelogn} that this is true for $5$-ary trees, that is, trees with maximum degree $6$. 

We start with some definitions. A \emph{$2$-drawing} of a path $P=v_1,\dots,v_m$ is a straight-line drawing of $P$ in which all the vertices lie along the same straight-line segment in the same order as they appear in $P$ and for each $i=2,\dots,m$ we have $\frac{\ell(v_{i-1},v_{i})}{2} \leq \ell(v_i,v_{i+1}) \leq 2\ell(v_{i-1},v_{i})$; see Figure~\ref{fig:2-path}. We have the following.

\begin{lemma}\label{le:2-drawing-ply-2}
A $2$-drawing of a path $P= (v_1,\dots,v_n)$ has ply-number at most $2$.
\end{lemma}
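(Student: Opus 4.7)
The plan is to reduce the two-dimensional ply computation to a one-dimensional interval problem, exploiting the fact that in a $2$-drawing all vertices of $P$ are collinear. I would place coordinates so that $v_1,\dots,v_n$ have positions $x_1<\dots<x_n$ on the $x$-axis, write $\ell_i=\ell(v_i,v_{i+1})=x_{i+1}-x_i$, and let $I_i=(x_i-r_{v_i},\,x_i+r_{v_i})$ denote the intersection of the ply-disk $D_{v_i}$ with the $x$-axis. The first step is to argue that it suffices to bound by $2$ the maximum number of intervals $I_i$ sharing a common point: if a point $q\in\mathbb{R}^2$ lies in $k$ of the ply-disks, then its orthogonal projection $q'$ onto the $x$-axis also lies in each of those $k$ disks (for a point on the line, disk-membership depends only on the $x$-distance to the centre), hence in the corresponding $k$ intervals.

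The core step is a disjointness claim: whenever $c\geq a+2$, the intervals $I_a$ and $I_c$ are disjoint. The $2$-drawing property $\ell_{i-1}/2\leq \ell_i\leq 2\ell_{i-1}$ gives both $\max(\ell_{i-1},\ell_i)\leq 2\ell_{i-1}$ and $\max(\ell_{i-1},\ell_i)\leq 2\ell_i$, so at every interior vertex $r_{v_i}=\max(\ell_{i-1},\ell_i)/2\leq\min(\ell_{i-1},\ell_i)$; the endpoints yield the analogous trivial bounds $r_{v_1}=\ell_1/2\leq\ell_1$ and $r_{v_n}=\ell_{n-1}/2\leq\ell_{n-1}$. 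In particular $r_{v_a}\leq\ell_a$ and $r_{v_c}\leq\ell_{c-1}$, and therefore
\[
  r_{v_a}+r_{v_c}\;\leq\;\ell_a+\ell_{c-1}\;\leq\;\sum_{j=a}^{c-1}\ell_j\;=\;x_c-x_a,
\]
so openness of the intervals forces $I_a\cap I_c=\emptyset$.

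From this the conclusion is immediate: any point belonging to three distinct intervals $I_a,I_b,I_c$ with $a<b<c$ would force $c\geq a+2$, contradicting the disjointness just established. Hence at most two intervals share any common point, and so $\pn(\Gamma)\leq 2$.

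The calculation in the display is routine; the only subtleties to watch are the endpoint cases (where a single incident edge makes the radius formula degenerate) and the strict-versus-non-strict inequality question, which is handled cleanly by the fact that the ply-disks, and hence the intervals $I_i$, are open. Neither of these presents a genuine obstacle.
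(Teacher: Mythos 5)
Your proof is correct and follows essentially the same route as the paper's: both arguments derive $r_{v_i}\leq\min(\ell_{i-1},\ell_i)$ from the $2$-drawing condition and then use collinearity to conclude that the open disks of two non-consecutive vertices cannot overlap, forcing ply at most $2$. Your version merely makes explicit the projection to one dimension and the summation $r_{v_a}+r_{v_c}\leq x_c-x_a$ that the paper leaves implicit.
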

\begin{proof}
Refer to Figure~\ref{fig:2-path}. For each vertex $v_i$, we have radius $r_{v_i} \leq \ell(v_i,v_{i+1})$ and $r_{v_i} \leq \ell(v_{i-1},v_i)$, since $\frac{\ell(v_{i-1},v_{i})}{2} \leq \ell(v_i,v_{i+1}) \leq 2\ell(v_{i-1},v_{i})$. This, together with the fact that all the vertices of $P$ lie along the same straight-line segment, implies that the ply-disk $D_{v_i}$ of $v_i$ may only intersect with $D_{v_{i-1}}$ and with $D_{v_{i+1}}$, but not with any of the other disks (note that $D_{v_i}$ may touch $D_{v_{i-2}}$ and $D_{v_{i+2}}$ in a single point, namely the one where vertices $v_{i-1}$ and $v_{i+1}$ lie, respectively), but cannot overlap with them.
\end{proof}

\begin{figure}[tb!]
	\centering
	\subfloat[\label{fig:2-path}]{
		\includegraphics[width=0.4\textwidth,page=1]{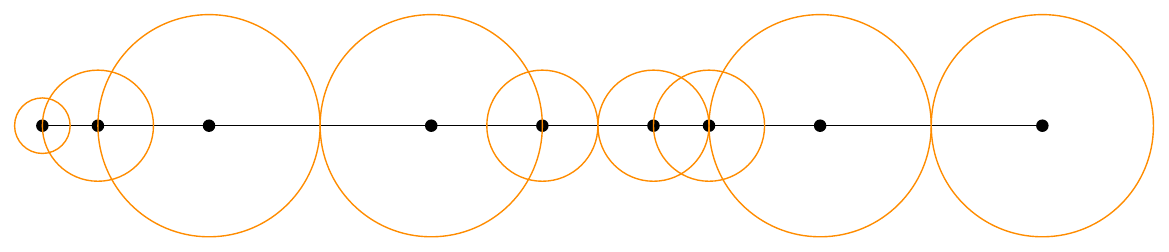}
	}\hfil
	\subfloat[\label{fig:heavy}]{
		\includegraphics[width=0.26\textwidth,page=1]{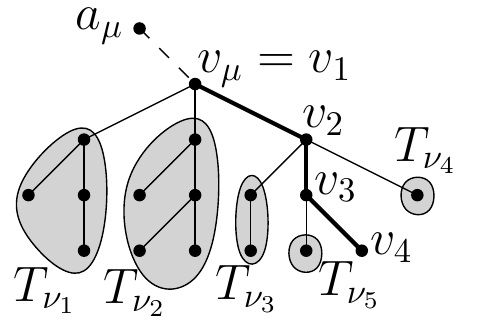}
	}\hfil
	\subfloat[\label{fig:heavy-tree}]{
		\includegraphics[width=0.26\textwidth,page=2]{images/heavy-path-tree.pdf}
	}
	\caption{(a) A $2$-drawing of a path. (b) A ternary tree $T_\mu$ and the path $\mu$, represented by fat edges, that is a node of the heavy-path tree $\mathcal{T}$; the subtrees $T_{\nu_1},\dots,T_{\nu_5}$ obtained when removing $\mu$ are inside shaded region. (c) The portion of $\mathcal{T}$ containing nodes $\mu$ and its children $\nu_1,\dots,\nu_5$. The arc of $\mathcal{T}$ between $\mu$ and a node $\nu_i$ is labeled with the node of $\mu$ that is the anchor of $\nu_i$.}
	\label{fig:heavy-path-tree}
\end{figure}

The \emph{heavy-path tree} $\mathcal{T}$ of a rooted tree $T$ is a decomposition tree of $T$ first defined by Sleator and Tarjan~\cite{st-adsdt-83} as follows; see Figures~\ref{fig:heavy}-\ref{fig:heavy-tree}. Each node $\mu \in \mathcal{T}$ is a path in $T$ between a vertex $v_\mu$ of $T$ and a leaf of the subtree $T_\mu$ of $T$ rooted at $v_\mu$. At the first step, $v_\mu$ is the root of $T$, $T_\mu$ is $T$, and the path $\mu$ we construct is the root of $\mathcal{T}$. To construct $\mu$, we start from $v_\mu$ and we always select the child of the current vertex whose subtree contains the largest number of vertices, until a leaf of $T_\mu$ is reached. Then, we remove all the vertices of $\mu$ from $T_\mu$ and their incident edges, hence obtaining a set of subtrees of $T_\mu$. For each of these subtrees $T_\nu$, rooted at a vertex $v_\nu$, we add a new node $\nu$ as a child of $\mu$ in $\mathcal{T}$ and recursively construct the corresponding path. Since each subtree $T_\nu$ has at most half of the vertices of $T_\mu$, the height of the heavy-path tree $\mathcal{T}$ is $O(\log n)$.

Let $\mu = (v_\mu=v_1, \dots,v_m)$ be any node in $\mathcal{T}$ and let $\tau$ be its parent. The vertex of $\tau$ that is adjacent to $v_\mu$ is the \emph{anchor} $a_\mu$ of $\mu$; in order to have an anchor $a_\mu$ also when $\mu$ is the root of $\mathcal{T}$, we add a dummy vertex to $T$ that is only incident to its root. The proof of the main theorem of this section is based on the following algorithm, which we call {\sc DrawPath}, to construct a special $2$-drawing of the path $P$ that is the concatenation of edge $(a_\mu,v_\mu)$ and of path~$\mu$.

Let $n_\mu$ be the total number of vertices in the subtrees of $T_\tau$ whose corresponding paths have $a_\mu$ as an anchor. Since $T_\mu$ is one of these subtrees, we have that $n_\mu > \sum_{i=1}^m n_i$, where $n_i$ is the total number of vertices in the subtrees $T_{\nu_1},\dots,T_{\nu_h}$ of $T_\mu$ such that paths $\nu_1,\dots,\nu_h$ have $v_i$ as anchor. Also, since $\mu$ is a path in a heavy-path tree, we have $n_i \leq n_\mu/2$ for each $1 \leq i \leq m$.

Algorithm {\sc DrawPath} starts by initializing $\ell(a_\mu,v_1)=n_1$ and $\ell(v_i,v_{i+1})=n_i+n_{i+1}$, for each $i=1,\dots,m-1$. Then, it visits the edges of $P$ one by one in decreasing order of their length in the current drawing. When an edge $(v_i,v_{i+1})$, with $1 \leq i \leq m-1$, is visited, set $\ell(v_{i-1},v_i)=\max\{\frac{\ell(v_i,v_{i+1})}{2},\ell(v_{i-1},v_i)\}$ and $\ell(v_{i+1},v_{i+2})=\max\{\frac{\ell(v_i,v_{i+1})}{2},\ell(v_{i+1},v_{i+2})\}$. We have the following.

\begin{lemma}\label{le:path-draw-algo-ternary}
Algorithm {\sc DrawPath} constructs
a $2$-drawing $\Gamma$ of $P$ such that $\ell(a_\mu,v_1) \geq n_1$, $\ell(v_i,v_{i+1}) \geq n_i+n_{i+1}$, for each $i=1,\dots,m-1$, and $\ell(P) \leq 6 n_\mu$.
\end{lemma}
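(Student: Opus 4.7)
The plan is to verify the three claimed properties by exploiting two simple facts about {\sc DrawPath}: (i) every update is of the form $\ell(\cdot) \leftarrow \max\{\ell(\cdot), \cdot\}$, so edge lengths are monotonically non-decreasing throughout; and (ii) since at every step we visit the edge with the largest \emph{current} length, once an edge is visited its length cannot be raised later, because any later update propagates at most half the length of a shorter, not-yet-visited edge. Call this latter invariant the \emph{freezing property}; it is equivalent to saying that the sequence $L_1 \geq L_2 \geq \cdots$ of visit-time lengths is non-increasing.

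The two stated lower bounds, $\ell(a_\mu, v_1) \geq n_1$ and $\ell(v_i, v_{i+1}) \geq n_i + n_{i+1}$, are immediate from the initialization together with (i). For the 2-drawing condition, fix two consecutive edges $e$ and $e'$ on $P$ and assume WLOG that $e$ is visited no later than $e'$ (the edge $(a_\mu, v_1)$, which is never itself visited, is handled through its single neighbor). When $e$ is visited with frozen length $L$, the update rule sets $\ell(e') \geq L/2$, and by (i) and freezing this gives $\ell(e') \geq \ell(e)/2$ at termination. Conversely, when $e'$ is later visited with some length $L' \leq L$, the update sets $\ell(e) \geq L'/2$ and leaves $\ell(e)$ frozen (since $L \geq L' \geq L'/2$), yielding $\ell(e) \geq \ell(e')/2$. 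Hence $\ell(e)/\ell(e') \in [1/2, 2]$ for every consecutive pair, which is exactly the 2-drawing definition.

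For the total length bound, I would prove by induction on the algorithm's steps that, at all times and for every edge $e_i$,
\[
\ell(e_i) \leq \max_j \mathrm{init}(e_j) \cdot 2^{-|i - j|},
\]
where $\mathrm{init}(e_0) = n_1$ and $\mathrm{init}(e_i) = n_i + n_{i+1}$ for $1 \leq i \leq m-1$. The base case is trivial, and an update $\ell(e_i) \leftarrow \ell(e_{i \pm 1})/2$ preserves the bound via $|(i \pm 1) - j| + 1 \geq |i - j|$. Replacing the $\max$ by a sum and swapping the order of summation,
\[
\ell(P) = \sum_i \ell(e_i) \leq \sum_i \sum_j \mathrm{init}(e_j)\,2^{-|i-j|} = \sum_j \mathrm{init}(e_j) \sum_i 2^{-|i-j|} \leq 3 \sum_j \mathrm{init}(e_j),
\]
since $\sum_i 2^{-|i-j|} \leq 1 + 2 \sum_{k\geq 1} 2^{-k} = 3$ for every $j$. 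The sum of initial lengths telescopes: $n_1 + \sum_{i=1}^{m-1}(n_i + n_{i+1}) \leq 2\sum_{i=1}^m n_i < 2 n_\mu$, using the inequality $n_\mu > \sum_i n_i$ recorded just before the lemma. Combining gives $\ell(P) \leq 6 n_\mu$.

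The main technical obstacle is the freezing property: one must carefully argue that processing edges in \emph{dynamic} decreasing order of current length (rather than the static initial order) still yields monotone visit-time lengths, so that no already-visited edge is ever re-inflated by a later update. This invariant is what simultaneously underpins both the 2-drawing conclusion and the clean geometric-series upper bound used for the total-length estimate; everything else is bookkeeping with $\max$ and a telescoping sum.
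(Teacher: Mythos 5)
Your proposal is correct, and the first two parts (the lower bounds via monotonicity of the $\max$-updates, and the $2$-drawing property via the ``freezing'' argument that visit-time lengths are non-increasing so a visited edge is never re-inflated) follow essentially the same route as the paper, which asserts the freezing property with about the same level of justification you give. Where you genuinely diverge is the bound $\ell(P)\leq 6n_\mu$: the paper argues combinatorially that every local maximum in the final length sequence retains its initial length $n_i+n_{i+1}$, decomposes $P$ into ``valleys'' between consecutive local maxima, and bounds each valley by two geometric series anchored at its endpoints, giving a contribution of less than $3(n_h+n_{h+1})$ per local maximum. You instead prove the pointwise invariant $\ell(e_i)\leq\max_j \mathrm{init}(e_j)\,2^{-|i-j|}$, which is preserved by every individual update, and then sum $\sum_i 2^{-|i-j|}\leq 3$ against the telescoping bound $\sum_j\mathrm{init}(e_j)<2n_\mu$. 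Your version is arguably cleaner: it avoids reasoning about the shape of the final length sequence, and --- contrary to your closing remark --- it does not actually need the freezing property at all, since the invariant holds under any order of updates; freezing is needed only for the $2$-drawing conclusion. Both arguments yield the same constant $6$. One minor point to tidy up: make explicit that the pair $\bigl((a_\mu,v_1),(v_1,v_2)\bigr)$ also satisfies the $2$-drawing ratio, which follows since $\ell(a_\mu,v_1)$ ends at $\max\{n_1,\ell(v_1,v_2)/2\}$ and $n_1\leq n_1+n_2\leq\ell(v_1,v_2)$.
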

\begin{proof}
First observe that $\ell(a_\mu,v_1) \geq n_1$ and $\ell(v_i,v_{i+1}) \geq n_i+n_{i+1}$ for each $i=1,\dots,m-1$, since this is true already after the initialization and since no operation performed by the algorithm reduces the length of any edge.

Also, the fact that $\Gamma$ is a $2$-drawing can be derived from the operations that are performed when an edge is visited. Note that after an edge has been visited by {\sc DrawPath}, its length is not modified any longer, since the edges are visited in decreasing order of edge lengths and since the length of an edge is modified only if this edge is shorter than one of its adjacent edges.
	
For the same reason, if an edge $(v_i,v_{i+1})$, with $1 \leq i \leq m-1$, determines a local maximum in the sequence of edge lengths in $\Gamma$ (that is, $\ell(v_h,v_{h+1}) \geq \ell(v_{h-1},v_h)$ and $\ell(v_h,v_{h+1}) \geq \ell(v_{h+1},v_{h+2})$), then $\ell(v_i,v_{i+1})=n_i+n_{i+1}$. We use this property to prove that the total length of the edges in $\Gamma$ is at most $6 n_\mu$. 
	
Consider any two edges $(v_h,v_{h+1})$ and $(v_q,v_{q+1})$, with $1 \leq h < q \leq m-1$, such that $\ell(v_h,v_{h+1})=n_h+n_{h+1}$, $\ell(v_q,v_{q+1})=n_q+n_{q+1}$, and such that $\ell(v_i,v_{i+1}) > n_i+n_{i+1}$ for each $i=h+1,\dots,q-1$; namely, $(v_h,v_{h+1})$ and $(v_q,v_{q+1})$ are two edges that have not been modified by algorithm {\sc DrawPath} after the initialization and such that all edges between them have been modified.

\begin{claim}
The total length of the edges in the subpath $P'$ of $\Gamma$ between $v_h$ and $v_{q+1}$ is at most $2(n_h+n_{h+1}) + 2(n_q+n_{q+1})$.
\end{claim}
\begin{proof}
Note that there exists no edge in $P'$ different from $(v_h,v_{h+1})$ and $(v_q,v_{q+1})$ that determines a local maximum in the sequence of edge lengths, since this would contradict the fact that $\ell(v_i,v_{i+1})>n_i+n_{i+1}$ for each $i=h+1,\dots,q-1$. Hence, $P'$ is composed of a sequence of edges starting at $(v_h,v_{h+1})$ and ending at an edge $(v_{j-1},v_{j})$, with $h < j \leq q-1$, with decreasing edge lengths, and of a sequence of edges starting at $(v_j,v_{j+1})$ and ending at $(v_q,v_{q+1})$ with increasing edge lengths. 
We have $\ell(v_h,v_{h+1})=n_h+n_{h+1}$ and $\ell(v_q,v_{q+1})=n_q+n_{q+1}$, by construction. Also, $\sum_{i=h+1}^{j-1} \ell(v_i,v_{i+1}) = \sum_{i=1}^{j-1-h} \frac{n_h+n_{h+1}}{2^{i}} < n_h+n_{h+1}$, since $\Gamma$ is a $2$-drawing.
Analogously, $\sum_{i=j}^{q-1} \ell(v_i,v_{i+1}) < n_q+n_{q+1}$. 
\end{proof}
	
Hence, every edge $(v_h,v_{h+1})$ such that $\ell(v_h,v_{h+1})=n_h+n_{h+1}$, together with the possible sequence of edges with increasing (decreasing) edge lengths preceding (following) it, gives a contribution of less than $3(n_h+n_{h+1})$. Since $\sum_{i=1}^{m} (n_i+n_{i+1}) < 2 n_\mu$, the total edge length is at most $6 n_\mu$.
%
\end{proof}

We describe an algorithm to construct a log-ply drawing of any rooted $n$-vertex $5$-ary tree $T$ with polynomial area. To simplify the description, we first give the algorithm for ternary trees; we discuss later the extension to $5$-ary trees.

Construct the heavy-path tree $\mathcal{T}$ of $T$. Then, construct a drawing of $T$ recursively according to a bottom-up traversal of $\mathcal{T}$. At each step of the traversal, consider a path $\mu \in \mathcal{T}$. We associate $\mu$ with a half-disk $D_\mu$ of radius $6^{h-d_\mu} n_\mu$, where $h$ is the height of $\mathcal{T}$ and $d_\mu$ is the depth of $\mu$. Refer to Figure~\ref{fig:half-disk}. The goal is to construct a drawing with ply-number at most $2 (h-d_\mu+1)$ of the subtree $T_\mu$ rooted at $v_\mu$, augmented with the anchor $a_\mu$ of $\mu$ and with edge $(a_\mu,v_\mu)$, inside $D_\mu$ in such a way that $a_\mu$ lies on the center of $D_\mu$ and all the vertices of $\mu$ lie along the radius of $D_\mu$ that is perpendicular to the diameter delimiting $D_\mu$.

\begin{figure}[tb!]
	\centering
	\subfloat[\label{fig:half-disk}]{
		\includegraphics[height=3cm,page=2]{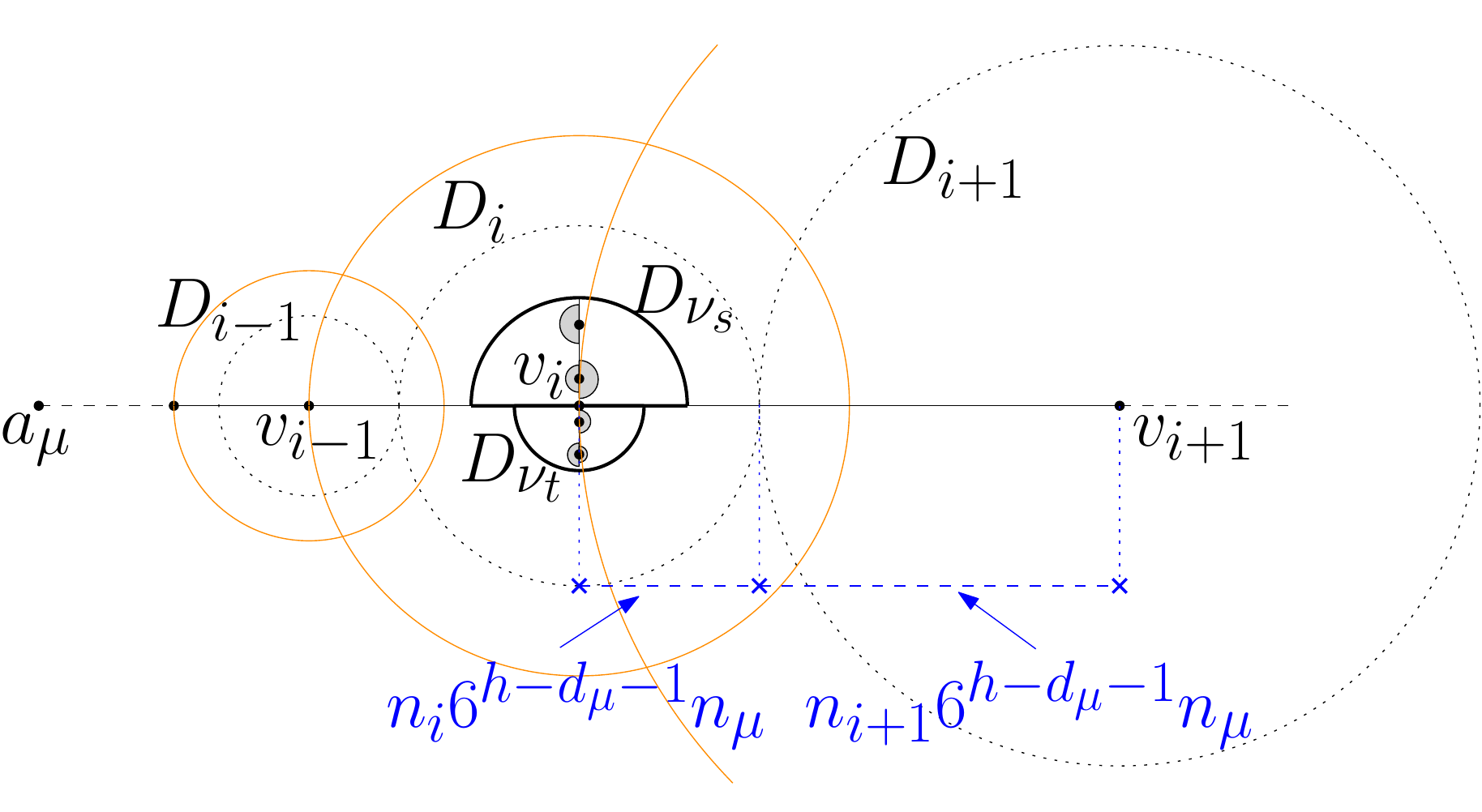}
	}\hfil
	\subfloat[\label{fig:algo-ternary}]{
		\includegraphics[height=3.4cm,page=1]{images/trees-algorithm-ternary.pdf}
	}\hfil
	\subfloat[\label{fig:quarter}]{
		\includegraphics[height=3cm,page=3]{images/trees-algorithm-ternary.pdf}
	}
	\caption{(a) The half-disk $D_\mu$ associated with $\mu$. (b) Illustration for the algorithm for ternary trees. Black dotted circles are the disks $D_i$; orange solid circles are the ply-disks. (c) Using a quarter-disk instead of a half-disk for $5$-ary trees.}
	\label{fig:algorithm}
\end{figure}

If $\mu=(v_1,\dots,v_m)$ is a leaf of $\mathcal{T}$, place $a_\mu$ on the center of $D_\mu$ and the vertices of $\mu$ along the radius of $D_\mu$ perpendicular to the diameter delimiting it, so that each edge has length $1$. This drawing has ply-number $1$ and satisfies the required properties by construction.

If $\mu=(v_1,\dots,v_m)$ is not a leaf, let $\nu_1,\dots,\nu_k$ be its children. Assume inductively that for each child $\nu_j$, with $j=1,\dots,k$, there exists a drawing with ply-number at most $2 (h-d_{\nu_j}+1)$ inside the half-disk $D_{\nu_j}$ with radius $6^{h-d_{\nu_j}} n_{\nu_j}$ with the required properties. We show how to construct a drawing with ply-number at most $2 (h-d_{\mu}+1)$ of $T_\mu$ inside the half-disk $D_{\mu}$ with radius $6^{h-d_{\mu}} n_{\mu}$ with the required properties; recall that $d_{\mu} = d_{\nu_j}-1$, for each $j=1,\dots,k$.

Refer to Figure~\ref{fig:algo-ternary}. First, apply algorithm {\sc DrawPath} to construct a $2$-drawing of the path $P$ composed of $\mu$ and of its anchor $a_\mu$ such that $\ell(a_\mu,v_1) \geq n_1$, $\ell(v_i,v_{i+1}) \geq n_i+n_{i+1}$, for $i=1,\dots,m-1$, and the total length of the edges in $P$ is at most $6 n_\mu$. Then, scale the obtained drawing by a factor of $6^{h-d_{\mu}-1} n_{\mu}$, which implies that the total length of the edges in $P$ is at most $6^{h-d_{\mu}} n_{\mu}$. Hence, it is possible to place the obtained drawing inside $D_\mu$ in such a way that $a_\mu$ lies on its center and the vertices of $\mu$ lie along the radius that is perpendicular to the diameter delimiting $D_\mu$. Further, for each vertex $v_i \in \mu$, consider a disk $D_i$ centered at $v_i$ of diameter $6^{h-d_{\mu}-1} n_{i}$. Due to the scaling performed before, no two disks $D_i$ and $D_h$, with $1 \leq i,h \leq m$, intersect with each other. 

Consider now the at most two children $\nu_s$ and $\nu_t$ of $\mu$ whose anchor is $v_i$; since $d_{\mu} = d_{\nu_j}-1$, for each $j=1,\dots,k$, and since $n_i = n_{\nu_s}+n_{\nu_t}$, the diameter of the half-disk $D_{\nu_s}$ and the one of the half-disk $D_{\nu_t}$ are both not larger than the diameter of disk $D_i$. Thus, we can plug the drawings of $T_{\nu_s}$ and $T_{\nu_t}$ lying inside $D_{\nu_s}$ and  $D_{\nu_t}$, which exist by induction, so that the centers of $D_{\nu_s}$ and  $D_{\nu_t}$ coincide with the center of $D_i$, and the diameters delimiting $D_{\nu_s}$ and  $D_{\nu_t}$ lie along edges $(v_{i-1},v_i)$ and $(v_i,v_{i+1})$; see Figure~\ref{fig:algo-ternary}. Hence, the constructed drawing of $T_\mu$ lies inside $D_\mu$ and satisfies all the required properties.

By Lemma~\ref{le:2-drawing-ply-2}, the ply-number of the $2$-drawing of $\mu$ constructed by algorithm {\sc DrawPath} is at most $2$, and it remains the same after the scaling. Also, the ply-disk of any vertex in $T_{\nu_s}$ (in $T_{\nu_t}$) entirely lies inside half-disk $D_{\nu_s}$ (half-disk $D_{\nu_t}$) and hence inside disk $D_i$; thus, it does not overlap with the ply-disk of any vertex in a different subtree. Since the drawing of $T_{\nu_j}$, for each child $\nu_j$ of $\mu$, has ply-number at most $2 (h-d_{\nu_j}+1)$, the drawing of $T_\mu$ has ply-number at most $2 + 2 (h-d_{\nu_j}+1) = 2 (h-d_{\nu_j}+2) = 2 (h-d_{\mu}+1)$, given that $d_{\mu} = d_{\nu_j}+1$.

At the end of the traversal, when the root $\rho$ of $\mathcal{T}$ has been visited, we have a drawing with ply-number at most $2(h-d_{\rho}+1) \leq 2\log n$ of $T_\rho=T$ inside the half-disk $D_\rho$ of radius $6^{h-d_{\rho}} n_{\rho} \leq 6^{\log n} n = O(n^{1+\log 6})=O(n^{3.6})$, and hence area $O(n^{7.2})$.

In order to extend the algorithm to work for $5$-ary trees, we have to be able to fit inside the ply-disk $D_i$ of each vertex $v_i \in \mu$ the drawings of the at most four subtrees $T_{\nu_j}$ whose anchor is $v_i$. Hence, we associate with each node $\mu$ a \emph{quarter-disk} $D_\mu$ (a sector of a disk with internal angle $\frac{\pi}{2}$; see Figure~\ref{fig:quarter}), instead of a half-disk, still with radius $6^{h-d_\mu} n_\mu$, and we draw $T_\mu$ inside $D_\mu$ in such a way that the anchor $a_\mu$ of $\mu$ lies on the center of $D_\mu$ and all the vertices of $\mu$ lie along the radius of $D_\mu$ along the bisector of $D_\mu$. Also in this case, the ply-disk of each vertex of $\mu$ entirely lies inside $D_\mu$. We thus have the following.

\begin{theorem}\label{thm:treelogn}
Every $n$-vertex $5$-ary tree has a drawing with ply-number at most $2 \log n$ and $O(n^{7.2})$ area.
\end{theorem}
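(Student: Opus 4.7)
The plan is to follow a heavy-path decomposition strategy. First I would build the heavy-path tree $\mathcal{T}$ of $T$, which by Sleator--Tarjan has height $O(\log n)$, and then draw $T$ by a bottom-up recursion on $\mathcal{T}$. To each node $\mu \in \mathcal{T}$, at depth $d_\mu$ in a tree of height $h = O(\log n)$, I would assign a quarter-disk $D_\mu$ of radius $6^{h-d_\mu} n_\mu$ inside which the drawing of $T_\mu$ together with its anchor vertex $a_\mu$ must be placed, with $a_\mu$ on the center of $D_\mu$ and all vertices of the path $\mu$ lying along the bisector of $D_\mu$.

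For the recursive step at $\mu = (v_1, \dots, v_m)$, I would invoke Algorithm {\sc DrawPath} (Lemma~\ref{le:path-draw-algo-ternary}) on the concatenation of the edge $(a_\mu, v_1)$ with $\mu$ to produce a $2$-drawing of total edge-length at most $6 n_\mu$, with the crucial lower bounds $\ell(a_\mu,v_1) \geq n_1$ and $\ell(v_i,v_{i+1}) \geq n_i+n_{i+1}$. After rescaling by $6^{h-d_\mu-1} n_\mu$, this path fits along the bisector radius of $D_\mu$, and the lower bounds guarantee that around each $v_i$ there is a pairwise-disjoint disk $D_i$ of diameter $6^{h-d_\mu-1} n_i$ lying inside $D_\mu$. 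Into each $D_i$ I would then plug the (up to four) inductively constructed quarter-disk drawings $D_{\nu_j}$ of children $\nu_j$ of $\mu$ anchored at $v_i$, aligning their bisectors with the edges of $\mu$ at $v_i$; since $d_{\nu_j} = d_\mu + 1$ and $\sum_j n_{\nu_j} \leq n_i$, the child radii $6^{h-d_{\nu_j}} n_{\nu_j} = 6^{h-d_\mu-1} n_{\nu_j}$ fit comfortably.

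The ply bound follows by induction: by Lemma~\ref{le:2-drawing-ply-2}, the rescaled path contributes at most $2$ to the ply at any point along it; and because each child subtree is confined inside its own $D_{\nu_j} \subseteq D_i$, disks of different children cannot overlap each other, so $\pn(T_\mu) \leq \max_j \pn(T_{\nu_j}) + 2$. Unrolling the recursion up to the root of $\mathcal{T}$ yields $\pn(T) \leq 2(h+1) \leq 2 \log n$. For the area bound, the root's quarter-disk has radius $6^{h} n = O(n \cdot 6^{\log n}) = O(n^{1+\log 6}) = O(n^{3.6})$, so the bounding rectangle has area $O(n^{7.2})$.

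The delicate point — and what I expect to be the main obstacle to write down carefully — is verifying the geometric fit at each recursive step: namely, that the $2$-drawing constraint on $\mu$ together with the lower bounds from {\sc DrawPath} really do leave the disks $D_i$ pairwise disjoint and each $D_i$ large enough to simultaneously host up to four children's quarter-disks on the two sides of $\mu$. Replacing the half-disks used in the ternary description with quarter-disks for the $5$-ary case is exactly the device that accommodates four children per vertex (two on each side of the path), while keeping the radius bookkeeping $6^{h-d_\mu} n_\mu$ unchanged.
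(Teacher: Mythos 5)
Your proposal follows essentially the same approach as the paper: heavy-path decomposition with $O(\log n)$ height, the {\sc DrawPath} $2$-drawing with the same length lower bounds and the $6n_\mu$ total-length bound, scaling into a quarter-disk of radius $6^{h-d_\mu} n_\mu$, nesting the children's quarter-disks inside pairwise-disjoint disks $D_i$ around the path vertices, and the same ply recursion $2(h-d_\mu+1)$ and area computation $O(n^{1+\log 6})=O(n^{3.6})$ per side. The delicate geometric-fit point you flag is exactly the part the paper handles via the bounds $\ell(v_i,v_{i+1}) \geq n_i+n_{i+1}$ and $n_i = \sum_j n_{\nu_j}$, so nothing essential is missing.
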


To extend this approach for trees with larger degree, we should use a disk sector $D_\mu$ with an internal angle smaller than $\frac{\pi}{2}$. In this case, however, we could not guarantee that the ply-disk of each vertex of $\mu$ lies inside $D_\mu$, and thus we could not compute the ply-number of the subtrees independently of each other.

\section{Conclusions and Open Problems}\label{se:conclusions}

In this work we considered drawings of trees with low ply-number. We proved that requiring the ply-number to be bounded by a constant is often a somewhat too strong limitation, as these drawings may not exist, even for bounded-degree trees, or may require exponential area. On the positive side, we showed that relaxing the requirement on the ply-number, allowing it to be bounded by a logarithmic function, makes the problem easier, as we gave an algorithm for constructing polynomial-area drawings with this property for trees with maximum degree $6$.

Our work leaves several interesting open questions:

First, while it is known that stars, caterpillars, and binary trees admit constant-ply drawings in exponential area~\cite{ply-original}, we were able to prove that this is unavoidable only for stars and caterpillars; this leaves open the question on the area-requirements of constant-ply drawings of binary trees.

Second, it would be interesting to reduce the gap between binary trees, which alway admit constant-ply drawings, and $10$-ary trees, which may not admit any of such drawings. More in general, a characterization of the trees admitting these drawings is a fundamental open question.

Finally, in this paper we provided the first results on log-ply drawings of trees. It would be worth studying which trees (or other classes of graphs) always admit this type of drawings, possibly with polynomial area.

\clearpage

\bibliographystyle{splncs03}
\bibliography{biblio}

\end{document}